\algnewcommand\algorithmicforeach{\textbf{for each}}
\algnewcommand{\IIf}[1]{\State\algorithmicif\ #1\ \algorithmicthen}
\algnewcommand{\EndIIf}{\unskip\ \algorithmicend\ \algorithmicif}
\begin{document}

\title{Towards a Complete Direct Mapping From Relational Databases To Property Graphs}
\titlerunning{Complete Direct Mapping From Relational Databases To Property Graphs}
%
\author{Abdelkrim Boudaoud , Houari Mahfoud , Azeddine Chikh}
\authorrunning{A. Boudaoud et al.}
%
\institute{Abou-Bekr Belkaid University \& LRIT Laboratory\\Tlemcen, Algeria\\
\email{\{abdelkrim.boudaoud, houari.mahfoud, azeddine.chikh\}@univ-tlemcen.dz}}
\maketitle              
\begin{abstract}
It is increasingly common to find complex data represented through the graph model. Contrary to relational models, graphs offer a high capacity for executing analytical tasks on complex data. Since a huge amount of data is still presented in terms of relational tables, it is necessary to understand how to translate this data into graphs. This paper proposes a \emph{complete mapping} process that allows transforming any relational database (schema and instance) into a property graph database (schema and instance). Contrary to existing mappings, our solution preserves the three fundamental mapping properties, namely: \emph{information preservation, semantic preservation} and \emph{query preservation}. Moreover, we study mapping any \emph{SQL} query into an equivalent \emph{Cypher} query, which makes our solution practical. Existing solutions are either incomplete or based on non-practical query language. Thus, this work is the first complete and practical solution for mapping relations to graphs.

\keywords{Direct mapping, Complete mapping, Relational database, Graph database, SQL, Cypher}
\end{abstract}

\section{Introduction}\label{section:introduction}
Relational databases (RDs) have been widely used and studied by researchers and practitioners for decades due to their simplicity, low data redundancy, high data consistency, and uniform query language (SQL).
Hence, the size of web data has grown exponentially during the last two decades. The interconnections between web data entities (e.g. interconnection between YouTube videos or people on Facebook) are measured by billions or even trillions \cite{FacebookMetrics} which pushes the relational model to quickly reach its limits as querying high interconnected web data requires complex SQL queries which are time-consuming. To overcome this limit, the graph database model is increasingly used on the Web due to its flexibility to present data in a normal form, its efficiency to query a huge amount of data and its analytic powerful.
This suggests studying a mapping from RDs to graph databases (GDs) to benefit from the aforementioned advantages.
This kind of mapping has not received more attention from researchers since only a few works \cite{Stoica2019,Stoica2019a,DeVirgilio2013,DeVirgilio2014} have considered it. A real-life example of this mapping has been discussed in \cite{Stoica2019}: ``\emph{investigative journalists have recently found, through graph analytics, surprising social relationships between executives of companies within the Offshore Leaks financial social network data set, linking company officers and their companies registered in the Bahamas. The Offshore Leaks PG was constructed as a mapping from relational database
(RDB) sources}''. In a nutshell, the proposed mappings suffer from at least one of the following limits : a) they do not study fundamental properties of mapping; b) they do not consider a practical query language to make the approach more useful; c) they generate an obfuscated schema.

This paper aims to provide a \emph{complete mapping} (\emph{CM}) from RDs to GDs by investigating the fundamental properties of mapping \cite{Stoica2019a}, namely : \emph{ information preservation} (IP), \emph{query preservation} (QP), and \emph{semantic preservation} (SP). In addition to data mapping, we study the mapping of SQL queries to Cypher queries which makes our results more practical since SQL and Cypher are the most used query languages for relational and graph data respectively.

\noindent\textbf{Contributions and Road-map.} Our main contributions are as follows :
\textit{1)} we formalize a \emph{CM} process that maps RDs to GDs in the presence of schema; \textit{2)} we propose definitions of \emph{schema graph} and \emph{graph consistency} that are necessary for this mapping; \textit{3)} we show that our \emph{CM} preserves the three fundamental mapping properties (\emph{IP}, \emph{SM}, and \emph{QP}); \textit{4)} in order to prove \emph{QP}, we propose an algorithm to map SQL queries into equivalent Cypher queries. To our knowledge, this work is the first complete effort that investigates mapping relations to graphs.

\noindent\textbf{Related Work.} We classify previous works as follows:\\
\noindent\textit{\underline{Mapping RDs to RDF Data.}} Squeda et al. \cite{Sequeda2012} studied the mapping of RDs to RDF graph data and relational schema to OWL ontology's. Moreover, SQL queries are translated into SPARQL queries. They were the first to define a set of mapping properties : information preservation, query preservation, semantic preservation and monotonicity preservation. They proved first that their mapping is information preserving and query preserving, while when it comes to the two remaining properties, preserving semantics makes the mapping not monotonicity preserving.\\
\noindent\textit{\underline{Mapping RDs to Graph Data.}} De Virgilio et al. \cite{DeVirgilio2013,DeVirgilio2014} studied mapping a) RDs to property graph data (PG) by considering schema both in source and target; and b) any SQL query into a set of graph traversal operations that realize the same semantic over the resulted graph data.We remark that the proposed mapping obfuscates the relational schema since the resulted graph schema is difficult to understand. Moreover, the mapping does not consider typed data. From the practical point of view, the graph querying language considered is not really used in practice and the proposed query mapping depends on the syntax and semantics of this language which makes hard the application of their proposal for another query language.
In addition, they apply an aggregation process that maps different relational tuples to the same graph vertex in order to optimize graph traversal operations. However, this makes the  mapping not information preserving and can skew the result of some analytical tasks that one would like to apply over the resulted data graph.

Stoica et al.\cite{Stoica2019,Stoica2019a} studied the mapping of RDs to GDs and any relational query (formalized as an extension of relational algebra) into a G-core query. Firstly, the choice of source and destination languages hinders the practicability of the approach. Moreover, it is hard to see if the mapping is semantic preserving since no definition of graph data consistency is given. Attributes, primary and foreign keys are verbosely represented by the data graph, which makes this later hard to understand and to query.

O.Orel et al.  \cite{Orel2017} discussed mapping relational data only into property graphs without giving attention neither to schema nor mapping properties.

The Neo4j system provides a tool called Neo4j-ETL \cite{neo4j}, which allows users to import their relational data into Neo4j to be presented as property graphs. Notice that the relational structure (both instance and schema) is not preserved during this mapping since some tuples of the relational data (resp. relations of the relational schema) are represented as edges for storage concerns (as done in \cite{DeVirgilio2014}). However, as remarked in \cite{Stoica2019}, this may skew the results of some analytical tasks (e.g. density of the generated graphs). Moreover, Neo4j-ETL does not allow the mapping of queries. S. Li et al. \cite{Li2021} study an extension of Neo4j-ETL by proposing mapping of SQL queries to Cypher queries. However, their mapping inherits the limits of Neo4j-ETL. In addition, no detailed algorithm is given for the query mapper which makes impossible the comparison of their proposal with other ones. This is also the limit of \cite{Matsumoto2018}.

Finally, Angles et al. \cite{Angles2020} studied mappings RDF databases to property graphs by considering both data and schema. They proved that their mapping ensures both information and semantic preservation properties.

Table ~\ref{tcw} summarizes most important features of related works.

\begin{table}[t]\centering\caption{Comparative table of related works.}\label{tcw}\begin{tabular}{lllllll|clcl|clclcl|clll|ccccclclllll}\cline{1-21}\multicolumn{2}{|c|}{\multirow{2}{*}{\bf Type}} & \multicolumn{5}{c|}{\multirow{2}{*}{\bf Work}} & \multicolumn{4}{c|}{\bf Mapping} & \multicolumn{6}{c|}{\begin{tabular}[c]{@{}c@{}}\bf Preserved \\ \bf properties\end{tabular}} & \multicolumn{4}{c|}{\multirow{2}{*}{\bf Mapping rules}} & & & & & \multicolumn{4}{c}{} & & & & \\ \cline{8-17}\multicolumn{2}{|c|}{} & \multicolumn{5}{c|}{} & \multicolumn{2}{c|}{\bf Schema} & \multicolumn{2}{c|}{ \bf Instance} & \multicolumn{2}{c|}{\bf IP} & \multicolumn{2}{c|}{\bf QP} & \multicolumn{2}{c|}{\bf SP} & \multicolumn{4}{c|}{} & & & & & \multicolumn{2}{c}{} & \multicolumn{2}{c}{} & \multicolumn{4}{c}{} \\ \cline{1-21}\multicolumn{2}{|l|}{\multirow{3}{*}{RDs $\rightarrow$ RDF}} & \multicolumn{5}{l|}{Sequeda et al. \cite{Sequeda2012}} & \multicolumn{2}{c|}{\checkmark} & \multicolumn{2}{c|}{\checkmark} & \multicolumn{2}{c|}{\checkmark} & \multicolumn{2}{c|}{\checkmark} & \multicolumn{2}{c|}{\checkmark} & \multicolumn{4}{c|}{\checkmark} & \multicolumn{4}{c}{} & \multicolumn{2}{c}{} & \multicolumn{2}{c}{} & & & & \\ \cline{3-21}\multicolumn{2}{|l|}{} & \multicolumn{5}{l|}{De vergillio et al. \cite{DeVirgilio2013,DeVirgilio2014}} & \multicolumn{2}{c|}{\checkmark} & \multicolumn{2}{c|}{\checkmark} & \multicolumn{2}{c|}{} & \multicolumn{2}{c|}{} & \multicolumn{2}{c|}{} & \multicolumn{4}{c|}{\checkmark} & \multicolumn{4}{c}{} & \multicolumn{2}{c}{} & \multicolumn{2}{c}{} & & & & \\ \cline{3-21}\multicolumn{2}{|l|}{} & \multicolumn{5}{l|}{Stoica et al. \cite{Stoica2019,Stoica2019a}} & \multicolumn{2}{c|}{\checkmark} & \multicolumn{2}{c|}{\checkmark} & \multicolumn{2}{c|}{\checkmark} & \multicolumn{2}{c|}{\checkmark} & \multicolumn{2}{c|}{\checkmark} & \multicolumn{4}{c|}{\checkmark} & \multicolumn{4}{c}{} & \multicolumn{2}{c}{} & \multicolumn{2}{c}{} & & & & \\ \cline{1-21}\multicolumn{2}{|l|}{RDF $\rightarrow$ PG} & \multicolumn{5}{l|}{Angeles et al. \cite{Angles2020}} & \multicolumn{2}{c|}{\checkmark} & \multicolumn{2}{c|}{\checkmark} & \multicolumn{2}{c|}{\checkmark} & \multicolumn{2}{c|}{} & \multicolumn{2}{c|}{\checkmark} & \multicolumn{4}{c|}{\checkmark} & \multicolumn{4}{c}{} & \multicolumn{2}{c}{} & \multicolumn{2}{c}{} & & & & \\ \cline{1-21}\multicolumn{2}{|l|}{\multirow{3}{*}{RDs $\rightarrow$ PG}} & \multicolumn{5}{l|}{O.Orel et al. \cite{Orel2017}} & \multicolumn{2}{c|}{} & \multicolumn{2}{c|}{\checkmark} & \multicolumn{2}{c|}{} & \multicolumn{2}{c|}{} & \multicolumn{2}{c|}{} & \multicolumn{4}{c|}{\checkmark} & \multicolumn{4}{c}{} & \multicolumn{2}{c}{} & \multicolumn{2}{c}{} & & & & \\ \cline{3-21}\multicolumn{2}{|l|}{} & \multicolumn{5}{l|}{S.Li et al. \cite{Li2021}} & \multicolumn{2}{l|}{} & \multicolumn{2}{c|}{\checkmark} & \multicolumn{2}{l|}{} & \multicolumn{2}{l|}{} & \multicolumn{2}{l|}{} & \multicolumn{4}{l|}{} & \multicolumn{1}{l}{} & \multicolumn{1}{l}{} & \multicolumn{1}{l}{} & \multicolumn{1}{l}{} & \multicolumn{1}{l}{} & & \multicolumn{1}{l}{} & & & & & \\ \cline{3-21}\multicolumn{2}{|l|}{} & \multicolumn{5}{l|}{Our Work} & \multicolumn{2}{c|}{\checkmark} & \multicolumn{2}{c|}{\checkmark} & \multicolumn{2}{c|}{\checkmark} & \multicolumn{2}{c|}{\checkmark} & \multicolumn{2}{c|}{\checkmark} & \multicolumn{4}{c|}{\checkmark} & & & & & & \multicolumn{1}{c}{} & & \multicolumn{1}{c}{} & & & & \\ \cline{1-21}\end{tabular}\end{table}

\section{Preliminaries}
This section defines the several notions that will be used throughout this paper.\\

Let $\mathcal{R}$ be an infinite set of relation names, $\mathcal{A}$ is an infinite set of attribute names with a special attribute $tid$, $\mathcal{T}$ is a finite set of attribute types (\emph{String}, \emph{Date}, \emph{Integer}, \emph{Float}, \emph{Boolean}, \emph{Object}), $\mathcal{D}$ is a countably infinite domain of data values with a special value \textsc{null}.

\subsection{Relational Databases}

 A \emph{relational schema} is a tuple $S=(R, A, T, \Sigma)$  where:

\begin{enumerate}

\item $R \subseteq \mathcal{R}$ is a finite set of relation names;
\item $A$ is a function assigning a finite set of attributes for each relation $r\in R$ such that $A(r) \subseteq \mathcal{A}\backslash \{tid\}$;
\item $T$ is a function assigning a type for each attribute of a relation, i.e. for each $r\in R$ and each $a \in A(r) \setminus \{tid\}$, $T(a) \subseteq \mathcal{T}$;

\item $\Sigma$  is a finite set of \emph{primary} (PKs) and \emph{foreign} keys (FKs) defined over $R$ and $A$. A \emph{primary key} over a relation $r \in R$ is an expression of the form $r[a_{1},\cdots,a_{n}]$ where $a_{1\leq i\leq n}\in A(r)$. A \emph{foreign key} over two relations
$r$ and $s$ is an expression of the form $r[a_{1},\cdots,a_{n}] \to s[b_{1},\cdots,b_{n}]$ where  $a_{1\leq i\leq n}\in A(r)$ and
$s[b_{1},\cdots,b_{n}] \in \Sigma.$

\end{enumerate}

An \emph{instance} $I$ of $S$ is an assignment to each $r \in R$ of a finite set  $I(r) = \{t_{1},\cdots, t_{n}\}$ of \emph{tuples}. Each \emph{tuple} $t_{i}:A(r)\cup\{tid\}\rightarrow\mathcal{D}$ is identified by $tid\neq\textsc{null}$ and assigns a value to each attribute $a\in A(r)$. We use $t_i(a)$ (resp. $t_i(tid)$) to refer to the value of attribute $a$ (resp. $tid$) in tuple $t_i$. Moreover, for any tuples $t_i,t_j\in I(r)$, $t_i(tid)\neq t_j(tid)$ if $i\neq j$.

For any instance $I$ of a relational schema $S=(R,A,T,\Sigma)$, we say that $I$ satisfies a primary key $r[a_{1},\cdots,a_{n}]$ in $\Sigma$  if: 1) for each tuple $t\in I(r)$, $t(a_{_{1\leq i\leq n}})\neq\textsc{null}$; and 2) for any $t^{'}\in I(r)$, if $t(a_{_{1\leq i\leq n}})=t^{'}(a_{_{1\leq i\leq n}})$ then $t=t^{'}$ must hold. Moreover, $I$ satisfies a foreign key $r[a_{1},\cdots,a_{n}] \to s[b_{1},\cdots,b_{n}]$ in $\Sigma$  if: 1) $I$ satisfies $s[b_{1},\cdots,b_{n}]$; and 2) for each tuple $t\in I(r)$, either $t(a_{_{1\leq i\leq n}})=\textsc{null}$ or there exists a tuple $t^{'}\in I(s)$ where $t(a_{_{1\leq i\leq n}})=t^{'}(b_{_{1\leq i\leq n}})$. The instance $I$ satisfies all integrity constraints in $\Sigma$, denoted by $I\vDash \Sigma$, if it satisfy all primary keys and foreign keys in $\Sigma$.

Finally, a \emph{relational database} is defined with $D_R=(S_R,I_R)$ where $S_R$ is a relational schema and $I_R$ is an instance of $S_R$.

\subsection{Property Graphs}

A \emph{property graph} \emph{(PG)} is a multi-graph structure composed of labeled and attributed vertices and edges defined with $G\!=\!(V,E,L,A)$ where: 1) $V$ is a finite set of vertices; 2) $E \subseteq V \times V$ is a finite set of directed edges where $(v,v^{'})\in E$ is an edge starting at vertex $v$ and ending at vertex $v^{'}$; 3) $L$ is a function that assigns a label to each vertex in $V$ (resp. edge in $E$); and 4) $A$ is a function assigning a nonempty set of key-value pairs to each vertex (resp. edge).
For any edge $e\in E$, we denote by $e.s$ (resp. $e.d$) the starting (resp. ending) vertex of $e$.

\subsection{SQL queries and Cypher queries}

We study in this paper the mapping of relational data into PG data. In addition, we show that any relational query over the source data can be translated into an equivalent graph query over the generated data graph. To this end, we model relational queries with the SQL language \cite{Guagliardo2018}  and the graph queries with the Cypher language \cite{Francis2018}  since each of these languages is the most used in its category. To establish a compromise between the expressive power of our mapping and its processing time, we consider a simple but very practical class of SQL queries and we define its corresponding class of Cypher queries. It is necessary to understand the relations between basic SQL queries and basic Cypher queries before studying all the expressive power of these languages.\\

The well-known syntax of SQL queries is “Select $I$ from $R$ where $C$” where: a) $I$ is a set of  items; b) $R$ is a set of relations names; and c) $C$ is a set of conditions. Intuitively, an SQL query selects first some tuples of relations in $R$ that satisfy conditions in $C$. Then, the values of some records (specified by $I$) of these tuples are returned to the user. 

On the other side, the Cypher queries considered in this paper have the syntax: “$Match$ patterns $Where$ conditions $Return$ items”. Notice that a Cypher query aims to find, based on edge-isomorphism, all subgraphs in some  data graph that match the pattern specified by the query and also satisfy conditions defined by this latter. Once found, only some parts (i.e. vertices, edges, and/or attributes) of these subgraphs are returned, based on items specified by $Return$ clause. Therefore, the $Match$ clause specifies the structure of subgraphs we are looking for on the data graph; the $Where$ clause specifies conditions (based on vertices, edges and/or attributes) these subgraphs must satisfy; and finally, the $Return$ clause returns some parts of these subgraphs to the user as a table. 

\begin{figure}[t]
\centering
\includegraphics[width=12cm]{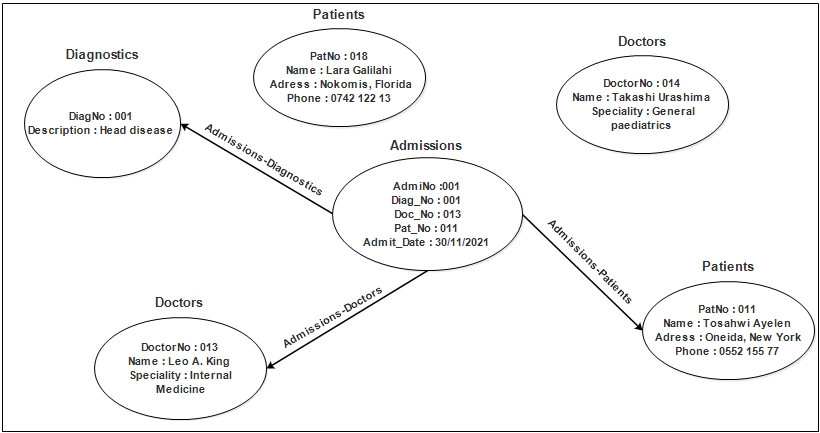}
\caption{Example of data graph.}
\label{EX_PG}
\end{figure}

\begin{example}
Fig.\ref{EX_PG} depicts a data graph where vertices represent entities (i.e. $Doctor$, $Patient$, $Diagnostic$ and $Admission$);  inner information (called attributes) represent properties of this vertex (e.g. $Speciality$); and edges represent relationships between these entities. For instance, an $Admission$ vertex may be connected to some $Patient$, $Doctor$ and  $Diagnostic$ vertices to specify for some patient: a) his doctor; b) information about his admission at the hospital; and c) diagnostics made for this patient. $\square$

The following Cypher query returns the name of each patient who is admitted at some date :\\

\begin{center}
\scriptsize
     \begin{minipage}{11cm}
        \begin{center}
           \begin{flushleft}\small{
		$MATCH \: (a:Admissions)<-[:Admissions-Patients]-(p:Patients)$\\
		$WHERE \:  a.Admi\_date \: = \: "30/11/2021"$\\
		$RETURN \:  p.Name$\\}
\end{flushleft}
        \end{center}
    \end{minipage}
\end{center}
$\square$
\end{example}

\subsection{Direct Mapping (\emph{DM})}

Inspired from\cite{Stoica2019,Stoica2019a,Sequeda2012}, We define in this section the \emph{direct mapping}  from a relational database into a graph database and we discuss its properties. Given a relational database $D_R$ composed of $S_R$ and $I_{R}$, a direct mapping consists of translating each entity in $D_{R}$ into a graph database without any user interaction. That is, any $D_{R}$=($S_{R}$, $I_{R}$) (with possibly empty $S_{R}$),  is translated automatically into a pair of property graphs ($S_{G}$, $I_{G}$) (with possibly empty $S_{G}$), that we call a graph database. Let $\mathcal{D}_{R}$ be an infinite set of relational databases, and $\mathcal{D}_{G}$ be an infinite set of graph databases. Based on these notions, we give the next definition of direct mapping and its properties.

\begin{definition}
A \emph{direct mapping} is a total function $DM:\mathcal{D}_{R}\to \mathcal{D}_{G}$. $\square$
\end{definition}

Intuitively, for each $D_R\in \mathcal{D}_{R}$, $DM(D_R)$ produces a graph database $D_G\in \mathcal{D}_{G}$ that aims to represent the source relational database (i.e. instance and optionally schema) in terms of a graph.

We define next fundamental properties that a direct mapping must preserve  \cite{Sequeda2012}, namely: \emph{information preservation}, \emph{query preservation} and \emph{semantic preservation}. The two first properties ensure that the direct mapping does not lose neither information nor semantic of the relational database being translated. The last property ensures that the mapping does not hinder the querying capabilities as any relational query can be translated into a graph query.

\subsubsection{Information Preservation}

A direct mapping \emph{DM} is \emph{information preserving} if no information is lost during the mapping of any relational database.

\begin{definition}[Information preservation]
A direct mapping \emph{DM} is \emph{information preserving} if there is a computable inverse mapping $DM^{-1}: \mathcal{D}_{G} \to \mathcal{D}_{R}$ satisfying $DM^{-1}(DM(D_{R})) = D_{R}$ for any $D_R\in\mathcal{D}_{R}$. $\square$
\end{definition}

\subsubsection{Query Preservation}

Recall that both SQL and Cypher queries return a result modeled as a table where columns represent entities requested by the query (using Select clause in case of SQL, or Return clause in case of Cypher), while each row assigns values to these entities. In addition, the result of both SQL and Cypher queries may contain repeated rows.

Let $I_{R}$ be a relational instance and $Q_{s}$  be an SQL query expressed over $I_{R}$. We denote by $[Q_{s}]_{I_{R}}$ the result table of $Q_{s}$ over $I_{R}$. Similarly, $[Q_{c}]_{I_{G}}$ is the result table of a Cypher query $Q_{c}$ over an instance graph $I_{G}$. Moreover, we denote by $[Q_{s}]^{*}_{I_{R}}$ (resp. $[Q_{c}]^{*}_{I_{G}}$) the refined table that has no repeated row.

A direct mapping \emph{DM} is \emph{query preserving} if any query over the relational database $D_R$ can be translated into an equivalent query over the graph database $D_G$ that results from the mapping of $D_R$.
That is, query preservation ensures that every relational query can be evaluated using the mapped instance graph.

Since SQL and Cypher languages return results in different forms, proving query preservation consists to define a mapping from the SQL result to the Cypher result. This principle was proposed first in \cite{Stoica2019a} between relational and RDF queries. Therefore, we revise the definition of query preservation as follows:

\begin{definition}[Query preservation]

A direct mapping \emph{DM} is \emph{query preserving} if, for any relational database  $D_{R}$=($S_{R}$,$I_{R}$)  and any SQL query $Q_s$, there exists a Cypher query $Q_{c}$  such that: each row in $[Q_{s}]^{*}_{I_{R}}$ can be mapped into a row in   $[Q_{c}]^{*}_{I_{G} \in DM(D_R)}$ and vice versa. By mapping a row r into a row r', we assume that r and r' contain the same data with possibly different forms. $\square$

\end{definition}

\subsubsection{Semantics Preservation} 

A direct mapping \emph{DM} is \emph{semantics preserving} if any consistent (resp. inconsistent) relational database is translated into a consistent (resp. inconsistent) graph database .


\begin{definition}[Semantic preservation]
A direct mapping \emph{DM} is semantic preserving if, for any relational database $D_R=( S_R,I_R)$ with a set of integrity constraints $\Sigma$,  $I_R  \models \Sigma$ iff: $DM(D_{R})$ produces a consistent graph database. $\square$
\end{definition}

Notice that no previous work have considered semantic preservation over data graphs. That is, no definition of graph database consistency have been given. We shall give later our own definition.


\section{Complete Mapping (\emph{CM})}
In this section, we propose a complete mapping \emph{CM} that transforms a complete relational database (schema and instance) into a complete graph database (schema and instance). We call our mapping \emph{Complete} since some proposed mappings (e.g \cite{Orel2017}) deal only with data and not schema.

\begin{definition}[\emph{Complete Mapping}]
A complete mapping is a function $ CM:\mathcal{D}_{R}\to \mathcal{D}_{G}$ from the set of all RDs to the set of all GDs such that : for each complete relational database $D_{R}$ = $(S_{R}, I_{R})$,  $CM(DR)$ generates a complete graph database $D_{G} =(S_{G}, I_{G})$. $\square$
\end{definition}

In order to produce a complete graph database, our \emph{CM} process is based on two steps, \emph{schema mapping} and \emph{instance mapping}, which we detail hereafter.

\subsection{Schema Graph and Instance Graph}

Contrary to relational data, graph data still have no schema definition standard. Hence, we extend the property graph definition in order to introduce our schema graph definition.

\begin{definition}[\emph{Schema Graph}]
\sloppy{A \emph{schema graph} is an extended property graph defined with $S_{_G}=(V_{_S},E_{_S},L_{_S},A_{_S},Pk,Fk)$ where: 1) $V_{_S}$ is a finite set of vertices; 2) $E_{_S}\subseteq V_{_S}\times V_{_S}$ is a finite set of directed edges where $(v,v^{'})\in E_{_S}$ is an edge starting at vertex $v$ and ending at vertex $v^{'}$; 3) $L_{_S}$ is a function that assigns a label to each vertex in $V_{_S}$ (resp. edge in $E_{_S}$); 4) $A_{_S}$ is a function assigning a nonempty set of pairs ($a_i:t_i$) to each vertex (resp. edge) where $a_i\in \mathcal{A}$ and $t_i\in\mathcal{T}$; 5) \empty{Pk} is a partial function that assigns a subset of $A_s(v)$ to a vertex $v$; and finally 6) for each edge $e \in E_{_S}$, $Fk(e, s)$ (resp. $Fk(e, d)$) is a subset of $A_{_S}(e.s)$ (resp. $A_{_S}(e.d)$). $\square$}
\end{definition}

The functions $Pk$ and $Fk$ will be used later to incorporate integrity constraints over graph databases.

\begin{definition}[\emph{Instance Graph}]
\sloppy{Given a schema graph $S_{_G}=(V_{_S},E_{_S},L_{_S},A_{_S},Pk,Fk)$, an instance $I_G$ of $S_G$, called an \emph{instance graph}, is given by a property graph $I_{_G}=(V_{_I},E_{_I},L_{_I},A_{_I})$ where:}

\begin{enumerate}
\item $V_{_I}$ and $E_{_I}$ are the set of vertices and the set of edges as defined for schema graph;
\item for each vertex $v_{i} \in V_{_I}$, there exists a vertex $v_{s}\in V_{_S}$ such that: a) $L_{_I}(v_{i})=L_{_S}(v_{s})$; and b) for each pair $(a:c)\in A_{_I}(v_{i})$ there exists a pair $(a:t)\in A_{_S}(v_{s})$ with \emph{type}($c$)=$t$. We say that $v_{i}$ corresponds to $v_{s}$, denoted by $v_{i} \sim v_{s}$.
\item for each edge $e_{i}=(v_{i},w_{i})$ in $E_{_I}$, there exists an edge $e_{s}=(v_{s},w_{s})$ in $E_{_S}$ such that: a) $L_{_I}(e_{i})=L_{_S}(e_{s})$; b) for each pair $(a:c)\in A_{_I}(e_{i})$ there exists a pair $(a:t)\in A_{_S}(e_{s})$ with \emph{type}($c$)=$t$; and c) $v_{i} \sim v_{s}$ and $w_{i}\sim w_{s}$. We say that $e_{i}$ corresponds to $e_{s}$, denoted by $e_{i} \sim e_{s}$. $\square$
\end{enumerate}
\end{definition}

As for relational schema, a schema graph determines the structure, meta-information and typing that instance graphs must satisfy. It is clear that an instance $I_G$ of $S_G$ assigns a set of vertices (resp. edges) to each vertex $v_{s}$ (resp. edge $e_{s}$) in $S_G$ that have the same label as $v_{s}$ (resp. $e_{s}$). Moreover, a vertex $v_{i}$ (resp. edge $e_{i}$) in $I_G$ corresponds to a vertex $v_{s}$ (resp. edge $e_{s}$) in $S_G$ if the value $c$, attached to any attribute $a$ of $v_{i}$ (resp. edge $e_{i}$), respects the type $t$ given for $a$ within $v_{s}$ (resp. $e_{s}$).

\begin{figure}[t]
\centering
\includegraphics[width=\textwidth]{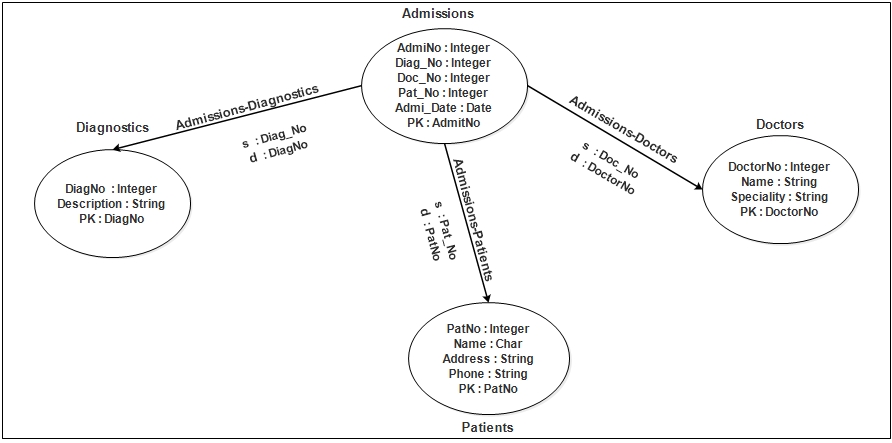}
\caption{Example of schema graph.}
\label{SG_IG}
\end{figure}

\begin{example}
\sloppy{Fig.\ref{SG_IG} depicts an example of a schema graph where each vertex (resp. edge) is represented naturally with its label (e.g. vertex \emph{Admissions}, edge \emph{Admissions-Doctors}) and a list of typed attributes (e.g. \emph{AdmiNo:Integer}). As a special case, the value of the attribute $Pk$ on some vertex refers to the value of the function $Pk$ on this vertex (e.g. $Pk$:$AdmiNo$ on vertex $Admissions$). Moreover, the values of attributes $s$ and $d$ on some edge $e$ refer to the values of the function $Fk(e,s)$ (resp. $Fk(e,d)$) at this edge (e.g. $s:Doc\_No$ and $d:DoctorNo$ on edge $Admissions-Doctors$). The use of these special attributes ($Pk$, $s$ and $d$) will be detailed later. One can see that the data graph of Fig.\ref{EX_PG} is an instance graph of the schema graph of Fig.\ref{SG_IG} since each vertex (resp. edge) of the former corresponds to some vertex (resp. edge) of the  latter. $\square$}
\end{example}

Finally, a \emph{graph database} is defined with $D_G=(S_G,I_G)$ where $S_G$ is a schema graph and $I_G$ is an instance of $S_G$. 

\subsection{Schema Mapping (\emph{SM})}

Given a relational schema $S_{R}=(R, A, T, \Sigma)$, we propose a schema mapping (\emph{SM}) process that produces a schema graph $S_{_G}=(V_{_S},E_{_S},L_{_S},A_{_S},Pk,Fk)$  as follows:

\begin{enumerate}

\item For each relation name $r \in R$ in $S_{R}$, there exists a vertex $v_{r} \in V_{_S} $ such that $L_{_S}(v_{r}) = r$;

\item For each $a \in A(r)$ with $T(a)=t$, we have a pair $(a:t) \in A_S(v_r)$;

\item For each \emph{primary key} $r[a_{1},\cdots,a_{n}] \in \Sigma$, we have $Pk(v_{r})$ = ``$a_{1},…,a_{n}$";

\item For each \emph{foreign key} $r[a_{1},\cdots,a_{n}] \to s[b_{1},\cdots,b_{n}] \in \Sigma$, we have an edge  $e = (v_{r},v_{s}) \in E_{_S}$ such that : a) $L_{_S}(e) = (s - r)$; b) $Fk(e,s)$= ``$a_{1},…,a_{n}$"; and c) $Fk(e,d)$ =  ``$b_{1},…,b_{n}$".

\item A special pair $(vid$:$Integer)$  is attached to each vertex of $S_G$ for storage concerns.
\end{enumerate}

\begin{figure}[t]
\begin{center}
\makebox[\textwidth][c]{\includegraphics[width=1.3\textwidth]{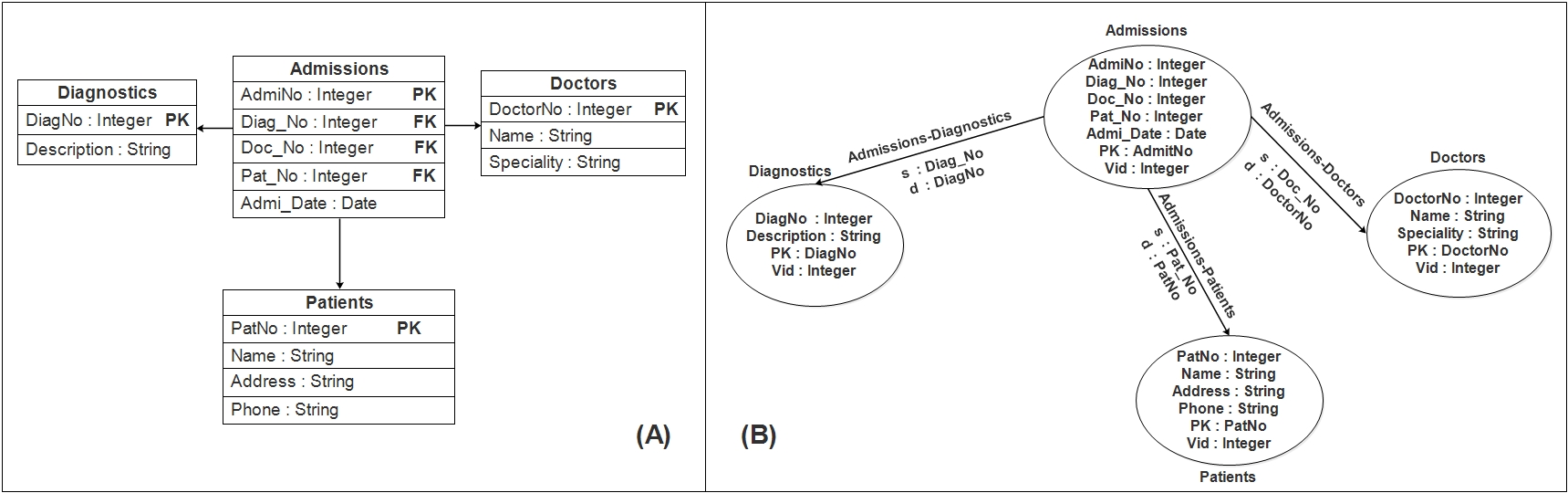}}%
\centering
\caption{Example of schema mapping.}
\label{ex3}
\end{center}
\end{figure}

\begin{example}
Fig.\ref{ex3} depicts ($A$) a relational schema and ($B$) its corresponding schema graph. One can see that our schema mapping rules are respected: $1$) each relation is mapped to a vertex that contains the label of this relation, its primary key and a list of its typed attributes; $2$) each foreign key between two relations (e.g. relations \emph{Admissions} and \emph{Patients} in part $A$) is represented by an edge between the vertices of these two relations (e.g. edge \emph{Admissions-Patients} in part $B$). $\square$
\end{example}

\subsection{Instance Mapping (\emph{IM})}\label{section_IM}

Given a relational database  $D_{R}$ = $(S_{R}, I_{R})$, we propose an instance mapping (\emph{IM}) process that maps the relational instance $I_{R}$ into an instance graph $I_{_G}=(V_{_I},E_{_I},L_{_I},A_{_I})$ as follows :

\begin{enumerate}

\item For each tuple $t\in I(r)$, there exists a vertex $v_t \in V_I$ with $L_I(v_t)=r$. We denote by $v_t$ the vertex that corresponds to the tuple $t$;

\item For each tuple $t\in I(r)$ and each attribute $a$ with $t(a)=c$, we have: $(a:c)\in A_I(v_t)$ if $a\neq tid$; and $(vid:c)\in A_I(v_t)$ otherwise.

\item for each \emph{foreign key} $r[a_{1},\cdots,a_{n}] \to s[b_{1},\cdots,b_{n}]$ defined with $S_G$ and any tuples $t\in I(r)$ and $t'\in I(s)$, if $t(a_i)=t'(b_i)$ for each $i\in[1,n]$, then: there is an edge $e=(v_{t},v_{t'})\in E_{_I}$ with $L_{_I}(e)=r-s$.
\end{enumerate}

\begin{example}
An example of our \emph{IM} process is given in Fig.\ref{ex2} where part (A) is the relational instance and part (B) is its corresponding instance graph. $\square$
\end{example}

\begin{figure}[t]
\centering
\makebox[\textwidth][c]{\includegraphics[width=1.3\textwidth]{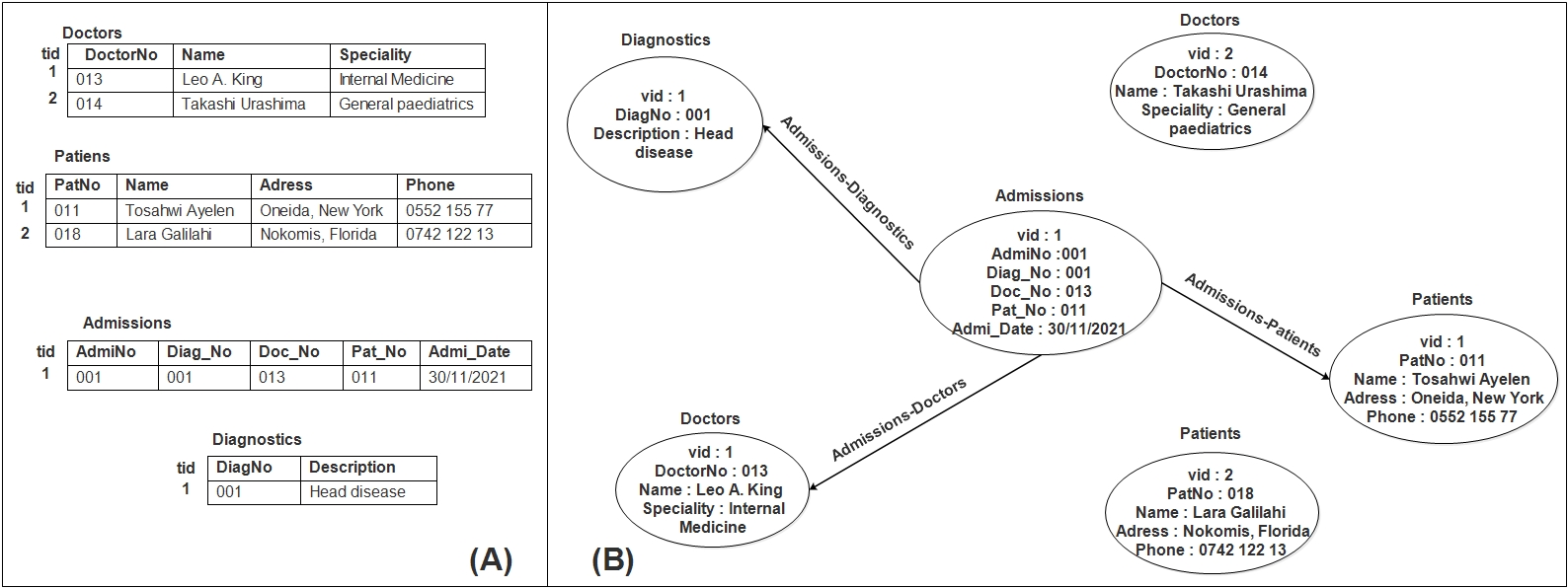}}%
\caption{Example of instance mapping.}
\label{ex2}
\end{figure}

It is clear that the special attribute $vid$ is used to preserve the tuples identification (i.e. the value of attribute $tid$) during the mapping process. 

We notice that our data mapping process is query language independent in the sense that any query language (e.g. Cypher, Gremlin, PGQL) can be applied over the resulting data graph.

\section{Properties of \emph{CM}}
We show that our \emph{CM} satisfies the three fundamental mapping properties \cite{Stoica2019a}: \emph{information preservation}, \emph{query preservation} and \emph{semantics preservation}.

\subsection{Information Preservation}
First, we explain that \emph{CM} does not lose any part of the information in the relational instance being translated :

\begin{theorem}
\label{t1}
The direct mapping \emph{CM}  is information preserving.
\end{theorem}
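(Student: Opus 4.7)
The plan is to prove information preservation by explicitly constructing a computable inverse $CM^{-1}:\mathcal{D}_{G}\to\mathcal{D}_{R}$ and verifying the identity $CM^{-1}(CM(D_R))=D_R$. Since $CM$ is built from two independent steps, $SM$ and $IM$, I would decompose the inverse accordingly into a schema-inverter $SM^{-1}$ acting on $S_G$ and an instance-inverter $IM^{-1}$ acting on $I_G$ (using the already recovered $S_R$). Both inverters are defined by literally reading the vertices, edges, and the $Pk$/$Fk$ annotations off the graph database, so computability is immediate.

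Concretely, $SM^{-1}$ would process $S_G=(V_S,E_S,L_S,A_S,Pk,Fk)$ as follows: for each vertex $v\in V_S$ create a relation name $r=L_S(v)\in R$; for each pair $(a:t)\in A_S(v)$ with $a\neq vid$ add $a$ to $A(r)$ with $T(a)=t$ (the auxiliary $vid$ attribute is dropped since it was added purely for storage); add the primary key $r[Pk(v)]$ to $\Sigma$; and for each edge $e=(v_r,v_s)\in E_S$ add the foreign key $r[Fk(e,s)]\to s[Fk(e,d)]$ to $\Sigma$, where $r=L_S(v_r)$ and $s=L_S(v_s)$. Given $S_R$ recovered this way, $IM^{-1}$ would process $I_G=(V_I,E_I,L_I,A_I)$ by turning each vertex $v_t\in V_I$ with $L_I(v_t)=r$ into a tuple $t\in I(r)$, setting $t(tid)$ to the value paired with $vid$ in $A_I(v_t)$ and $t(a)=c$ for every other pair $(a:c)\in A_I(v_t)$. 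The edges of $I_G$ are simply discarded: by the definition of $IM$, every such edge is uniquely determined by a foreign key in $\Sigma$ together with the values stored in the incident vertices, so no information is lost by ignoring them.

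It then remains to verify, step by step, that this inverse actually returns the original database. For the schema part, rules (1)--(4) of $SM$ establish a label-preserving bijection between $R$ and $V_S$, between attributes of $r$ and non-$vid$ pairs in $A_S(v_r)$, between primary keys and values of $Pk$, and between foreign keys and edges of $S_G$ with their $Fk$ annotations; each of these bijections is inverted exactly by the corresponding clause of $SM^{-1}$, so $SM^{-1}(SM(S_R))=S_R$. For the instance part, rules (1)--(2) of $IM$ induce a bijection between tuples of $I_R$ and vertices of $I_G$ that preserves the label and every attribute/value pair (with $tid\leftrightarrow vid$); inverting this bijection recovers $I_R$ exactly, and rule (3) only adds edges that are redundant information in the presence of $S_R$, so their omission by $IM^{-1}$ is harmless.

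The main obstacle I anticipate is a bookkeeping one rather than a conceptual one: I must be careful that the auxiliary attribute $vid$, which is introduced in $SM$ rule (5) and reused in $IM$ rule (2) to carry $tid$, is consistently handled so that both the schema inverse and the instance inverse agree on treating $vid$ as non-data. Once this is pinned down, and once one observes that the edges of $I_G$ are a deterministic function of the foreign keys in $\Sigma$ and the attribute values of the vertices, every other verification reduces to an unwinding of the $SM$ and $IM$ rules. Hence $CM^{-1}(CM(D_R))=D_R$ for every $D_R\in\mathcal{D}_R$, which establishes that $CM$ is information preserving.
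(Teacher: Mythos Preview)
Your proposal is correct and follows essentially the same approach as the paper: the paper also proves Theorem~\ref{t1} by exhibiting a computable inverse $CM^{-1}$ decomposed into $SM^{-1}$ (reading relations, attributes, primary keys, and foreign keys off the vertices, edges, and $Pk$/$Fk$ annotations of $S_G$, while dropping the auxiliary $vid$) and $IM^{-1}$ (turning each vertex of $I_G$ back into a tuple, with $vid\mapsto tid$). If anything, your write-up is slightly more thorough than the paper's, since you explicitly justify why the instance edges can be discarded and why the bijections invert correctly, points the paper leaves implicit.
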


\begin{proof}
Theorem \ref{t1} can be proved easily by showing that there exists a computable mapping $CM^{-1}:D_G\rightarrow D_R$ that reconstructs the initial relational database from the generated graph database. Since our mapping $CM$ is based on two steps (schema and instance mappings), then $CM^{-1}$ requires the definition of $SM^{-1}$ and $IM^{-1}$ processes. Due to the space limit, the definition of $CM^{-1}$ is given in Appendix \ref{appendix_proof_th1}.
\end{proof}

\subsection{Query Preservation}
Second, we show that the way \emph{CM} maps relational instances into instance graphs allows one to answer the SQL query over a relational instance by translating it into an equivalent Cypher query over the generated  graph instance.

\begin{theorem}\label{t2}
The direct mapping \emph{CM}  is query Preserving.
\end{theorem}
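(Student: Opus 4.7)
The plan is to give a constructive proof: exhibit a translation algorithm $\mathrm{Trans}$ that takes any SQL query $Q_s$ in the considered class (\emph{Select} $I$ \emph{From} $R$ \emph{Where} $C$) and produces a Cypher query $Q_c=\mathrm{Trans}(Q_s)$, then show that $Q_c$ evaluated over $I_G$ returns, up to a syntactic row renaming, the same rows as $Q_s$ evaluated over $I_R$. The paper already announces such a translator as part of the contributions, so proving Theorem~\ref{t2} amounts to stating and verifying its correctness.

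For the translation itself, I would proceed in four steps. First, for each relation name $r$ appearing in the \emph{From} clause, introduce a vertex pattern $(x_r\!:\!r)$ in the \emph{Match} clause, relying on schema mapping rule~1 (one vertex label per relation) and instance mapping rule~1 (one vertex per tuple). Second, for each equi-join condition in $C$ that matches a declared foreign key $r[a_1,\ldots,a_n]\to s[b_1,\ldots,b_n]$, replace the condition by an edge pattern $(x_r)\text{-}[:r\text{-}s]\text{->}(x_s)$; instance mapping rule~3 guarantees that such an edge exists in $I_G$ exactly when the join condition holds in $I_R$. Third, translate the remaining selection conditions in $C$ on tuple attributes into conditions in the Cypher \emph{Where} clause on the matched vertex attributes, renaming $tid$ to $vid$ as dictated by instance mapping rule~2. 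Finally, translate each item $r.a$ in the \emph{Select} clause into $x_r.a$ (or $x_r.vid$ when $a=tid$) in the \emph{Return} clause.

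The correctness argument is a tuple-to-row bijection. In the forward direction, a row $\rho\in[Q_s]^*_{I_R}$ is witnessed by a tuple assignment $(t_1,\ldots,t_k)$; instance mapping rule~1 provides the corresponding vertices $(v_{t_1},\ldots,v_{t_k})$ satisfying the vertex patterns, rule~3 provides the edges satisfying the edge patterns, and rule~2 ensures the attribute values used in the Cypher \emph{Where} and \emph{Return} clauses agree with those read from the tuples. Hence the matching row appears in $[Q_c]^*_{I_G}$. In the backward direction, every subgraph that matches $Q_c$ is produced by applying $IM^{-1}$ (whose existence is guaranteed by Theorem~\ref{t1}) to a tuple assignment satisfying $Q_s$, so every row of $[Q_c]^*_{I_G}$ has a preimage in $[Q_s]^*_{I_R}$. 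Since duplicate elimination commutes with a bijection, the two refined tables correspond exactly.

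The hard part will be handling join conditions. An equi-join backed by a foreign key in $\Sigma$ translates cleanly to an edge pattern because instance mapping rule~3 is precisely a foreign-key-driven construction; but a general SQL query may equate attributes that are \emph{not} backed by any foreign key, in which case no edge exists in $I_G$ and the translator must fall back on a value-level equality test between $x_r.a$ and $x_s.b$ in the Cypher \emph{Where} clause. Getting this case analysis right, and arguing that it exhausts the syntactic class of SQL queries being considered, is the real content of the proof; after that, the bijection follows mechanically from the three instance mapping rules. A secondary subtlety worth spelling out is the precise row-renaming convention between SQL and Cypher outputs (in particular the $tid\mapsto vid$ substitution and the fact that both languages use bag semantics on results), so that the refined tables can be compared row for row.
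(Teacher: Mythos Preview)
Your proposal is correct and follows essentially the same approach as the paper: both exhibit a clause-by-clause translator (the paper calls it $S2C$) that maps \emph{From}$\to$\emph{Match} (relations to vertex patterns, foreign-key joins to edge patterns), \emph{Where}$\to$\emph{Where}, and \emph{Select}$\to$\emph{Return}. If anything, your write-up is more careful than the paper's own proof, which simply presents the algorithm and a running example without the explicit bijection argument or the discussion of non-foreign-key joins.
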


\begin{proof}
Proving Theorem \ref{t2} can be done by providing an algorithm $S2C$ that, for any SQL query $Q_s$, produces an equivalent Cypher query $Q_c$ such that: for any relation database $D_R=(S_R,I_R)$ and any SQL query $Q_s$ over $I_R$, each row in $[Q_s]^{*}_{I_R}$ can be mapped to a row in $[Q_c]^{*}_{I_G}$ where $I_G\in CM(D_R)$ and $Q_c=S2C(Q_s)$. Our algorithm $S2C$ is summarized in Algorithm \ref{S2C}. Given an SQL query $Q_s$ in input, $S2C$ proceeds as follows: a) analyze and extract clauses from $Q_s$; b) compute for any SQL clause their equivalent Cypher clause; c) combine the resulted Cypher clauses in $Q_c$; and d) return the final Cypher query $Q_c$. Due to space limit, we give a reduced version of our query mapping algorithm that deals only with simple queries. However, one can easily extend our algorithm to deal with composed versions (e.g. queries with $IN$ clause). A running example of query mapping is given in Appendix \ref{appendix_qm}.

 \begin{algorithm}[t]
 \scriptsize
    \caption{S2C Algorithm }\label{S2C}
    \hspace*{\algorithmicindent} \textbf{Input: A simple SQL query $Q_s$, A relational schema $S_R$} \\
    \hspace*{\algorithmicindent} \textbf{Output: Its equivalent Cypher query $Q_c$.}
    \begin{algorithmic}[1]
    \State Rename relations (via $AS$-clause) in $Q_s$ if not applied;
    \State Extract the Select-clause ($SC$), the From-clause ($FC$) and the Where-clause ($WC$) from $Q_s$;
    \State Generate a Match-clause ($mc$) from $FC$:
    \Statex $a$) by translating each relation name $r$ in $SC$ into a vertex with label $r$; and
    \Statex $b$) by representing each join in $Q_s$ with an edge in $mc$ basing on $S_R$;
    
    \State Generate a Where-clause ($wc$) from $WC$ by translating each condition over a relation (attribute) in $WC$ into a condition over the corresponding vertex (attribute);
    
    \State Generate a Return-clause ($rc$) from $SC$ by translating each relation (attribute) in $SC$ into a corresponding vertex (attribute);
    
    \State Generate a Cypher query $Q_c$ by combining $mc$, $wc$ and $rc$;
    \State  Return $Q_c$;
    \end{algorithmic}
\end{algorithm}
\end{proof}

\subsection{Semantic Preservation}

Finally, we show that \emph{CM} is semantic preserving by checking consistency (resp. inconsistency) of relational database and graph database. Recall that a direct mapping is semantic preserving if any consistent (resp. inconsistent) relational database is translated into a consistent (resp. inconsistent) graph database. While consistency of relational database is well-known, no definition is given for graph database since there exists no standard for (schema) graph definition. To overcome this limit, we added the functions $Pk$ and $Fk$ to our schema graph definition in order to make possible the consistency checking for graph databases.

\begin{definition}[\emph{Graph consistency}]\label{definition_graph_consistency}
For any graph database $D_G=(S_G, I_G)$, the instance $I_G$ is said to be consistent w.r.t $S_G$ if:

\begin{itemize}
\item For each vertex $v_s\in V_s$ with $Pk(v_s)$=$``a_1,...,a_n"$ and each vertex $v_i\in V_I$ that corresponds to $v_s$: there exists no pair $(a_i:NULL) \in A_I(v_i)$ with $i \in [1,n]$. Moreover, for each $v'_i\in V_I\backslash\{v_i\}$ that corresponds to $v_s$, the following condition must not hold: for each $i \in [1,n]$, $(a_i:c) \in A_I(v_i) \cap A_i(v'_i)$.

\item For each edge $e_s\in E_s$ with $Fk(e_s,s)$=$``a_1,...,a_n"$ and $Fk(e_s, d)$=$``b_1,...,b_n"$, if $e_i=(v_1,v_2) \in E_I$ is an edge that corresponds to $e_s$ then we have: $(a_i:c)\in A_I(v_1)$ and $(b_i:c)\in A_I(v_2)$ for each $i\in [1,n]$. $\square$
\end{itemize}
\end{definition}

Intuitively, the consistency of graph databases is inspired from that of relational databases.

\begin{theorem}
\label{t3}
The direct mapping \emph{CM}  is semantic Preserving.
\end{theorem}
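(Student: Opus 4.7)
The plan is to prove the biconditional by tracing, constraint by constraint, the correspondence that $SM$ and $IM$ establish between the elements of $\Sigma$ and the decorations $Pk$ and $Fk$ of $S_G$. The key observation is that $SM$ creates exactly one schema vertex $v_r$ per relation $r$ with $Pk(v_r)$ carrying the primary key of $r$, and exactly one schema edge $e_s=(v_r,v_s)$ per foreign key $r[a_1,\ldots,a_n]\to s[b_1,\ldots,b_n]$ with $Fk(e_s,s)$ and $Fk(e_s,d)$ recording the source and destination attribute lists. Likewise, $IM$ yields a natural bijection between tuples $t\in I(r)$ and instance vertices $v_t$ corresponding to $v_r$, preserving all attribute values (with $tid$ stored as $vid$), and creates an instance edge precisely when two tuples agree on the foreign-key attributes. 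With this correspondence in hand, both directions of the theorem reduce to unpacking the graph-consistency definition against the relational definitions of primary- and foreign-key satisfaction.

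For the forward direction I would assume $I_R\models\Sigma$ and verify each clause of Definition~\ref{definition_graph_consistency}. For primary keys, fix $v_s\in V_S$ whose $Pk(v_s)$ is the list $a_1,\ldots,a_n$; it corresponds through $SM$ to some $r\in R$ with PK $r[a_1,\ldots,a_n]$, and the vertices in $V_I$ corresponding to $v_s$ are exactly the images $v_t$ of tuples $t\in I(r)$. By PK satisfaction in $I_R$, no such $t$ has $t(a_j)=\textsc{null}$ and no two distinct tuples agree on all $a_j$; these facts transport through $IM$ to the required conditions on $A_I(v_t)$. For foreign keys the verification is almost immediate: rule~3 of $IM$ creates an instance edge $e_i=(v_1,v_2)$ corresponding to $e_s$ only when witnessing tuples $t,t'$ satisfy $t(a_j)=t'(b_j)=c$, so $(a_j\!:\!c)\in A_I(v_1)$ and $(b_j\!:\!c)\in A_I(v_2)$ hold by construction.

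For the reverse direction I would argue the contrapositive: if $I_R\not\models\Sigma$, some PK or FK is violated, and I must exhibit a witness of inconsistency in $I_G$. A PK violation on $r[a_1,\ldots,a_n]$ yields either a tuple with a null key component, mapping through $IM$ to a pair $(a_j\!:\!\textsc{null})\in A_I(v_t)$, or two distinct tuples of $I(r)$ agreeing on $a_1,\ldots,a_n$, mapping to two distinct instance vertices corresponding to $v_r$ that share all PK attribute pairs; both scenarios directly contradict the PK clause of graph consistency. The ``target PK not satisfied'' sub-case of an FK violation reduces to the preceding PK argument applied to the target relation $s$.

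The main obstacle will be the remaining foreign-key violation, where some $t\in I(r)$ has $t(a_j)\neq\textsc{null}$ for every $j$ but no tuple of $I(s)$ matches it on $b_1,\ldots,b_n$. By rule~3 of $IM$ no edge corresponding to $e_s$ is emitted from $v_t$, and the FK clause of Definition~\ref{definition_graph_consistency}, as stated, only constrains edges that already exist, so this dangling reference is not literally falsified. I expect the resolution to require reading that clause together with the closed-world construction of $IM$: since $IM$ is the only source of edges in $I_G$, a vertex $v_t$ whose $Fk(e_s,s)$ attributes are all non-null but which is not the tail of any edge corresponding to $e_s$ must itself be recognised as a consistency witness, either by strengthening Definition~\ref{definition_graph_consistency} to include this requirement or by arguing that such a graph is not in the image of $CM$ restricted to constraint-satisfying relational instances. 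Once this gap is bridged, the contrapositive closes and, combined with the forward direction, yields semantic preservation of $CM$.
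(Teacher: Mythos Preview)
Your overall strategy coincides with the paper's: the paper argues by contradiction, splitting into case~(A) ``$I_R$ consistent, $I_G$ inconsistent'' and case~(B) ``$I_R$ inconsistent, $I_G$ consistent'', which is exactly your forward direction and the contrapositive of your reverse direction. Your forward-direction argument matches the paper's treatment of case~(A) almost clause for clause (they walk through the three ways Definition~\ref{definition_graph_consistency} can fail and trace each back, via the $IM$ rules, to a violation in $I_R$). For case~(B) the paper simply writes ``can be done in a similar way'' and gives no details.

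The obstacle you flag in the reverse direction---a tuple $t\in I(r)$ with all $t(a_j)\neq\textsc{null}$ but no matching tuple in $I(s)$, so that rule~3 of $IM$ creates no edge and the FK clause of Definition~\ref{definition_graph_consistency} is vacuously satisfied---is genuine, and the paper does not address it: their case~(B) is precisely where this would have to be handled, and it is waved away. Your diagnosis is correct that Definition~\ref{definition_graph_consistency}, as written, only constrains \emph{existing} instance edges, so a dangling foreign-key reference in $I_R$ maps to an $I_G$ that is graph-consistent under that definition. In other words, you have not introduced a gap; you have uncovered one that the paper's own proof leaves open. Your suggested fix (strengthening the FK clause of Definition~\ref{definition_graph_consistency} to require that every instance vertex with non-null $Fk(e_s,s)$ attributes be the source of some edge corresponding to $e_s$) is the natural repair and would make both your argument and the paper's go through.
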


\begin{proof}

The proof of Theorem \ref{t3} is straightforward  and can be done by contradiction based on the mapping rules of \emph{IM} (Section \ref{section_IM}). Given a relational database $D_R = (S_R, I_R)$ and let $D_G = (S_G, I_G)$ be its equivalent graph database generated by \emph{CM}. We suppose that \emph{CM} is not semantic preserving. This means that either (A) $I_R$ is consistent and $I_G$ is inconsistent; or (B) $I_R$ is inconsistent while $I_G$ is consistent. We give only proof of case (A) since that of case (B) can be done in a similar way.

\noindent We suppose that $I_R$ is consistent w.r.t $S_R$ while $I_G$ is inconsistent w.r.t $S_G$. Based on Def. \ref{definition_graph_consistency} $I_G$ is inconsistent if one of the following conditions holds:

\noindent \textit{\textbf{1)}} There exists a vertex $v_i\in V_I$ that corresponds to a vertex $v_s\in V_s$ where: a) $Pk(v_s) = ``a_1,...,a_n"$; and b) $(a_i:NULL)\in A_I(v_i)$ for some attribute $a_{i\in[1,n]}$. Based on mapping rules of \emph{IM} process, $v_i$ corresponds to some tuple $t$ in $I_R$ and attributes $a_1,...,a_n$ correspond to a primary key defined over $I_R$ by $S_G$. Then (b) implies that the tuple $t$ assigns a $NULL$ value to the attribute $a_i$ which makes $I_R$ inconsistent. However, we supposed that $I_R$ is consistent.

\noindent \textit{\textbf{2)}} There are two vertices $v_1,v_2\in V_I$ that correspond to a vertex $v_s\in V_s$ where: a) $Pk(v_s) = ``a_1,...,a_n"$; and b) $(a_i:c)\in A_I(v_1)\cap A_I(v_2)$ for each $i\in[1,n]$. Based on mapping rules of \emph{IM} process, $v_1$ (resp. $v_2$) corresponds to some tuple $t_1$ (resp. $t_2$) in $I_R$ and attributes $a_1,...,a_n$ correspond to a primary key defined over $I_R$ by $S_G$. Then (b) implies that the tuples $t_1$ and $t_2$ assign the same value to each attribute $a_i$ which makes $I_R$ inconsistent. However, we supposed that $I_R$ is consistent.

\noindent \textit{\textbf{3)}} There exists an edge $e_i=(v_1,v_2)$ in $E_I$ that corresponds to an edge $e=(v_s,v_d)$ in $E_s$ where: a) $Fk(e,s) = ``a_1,...,a_n"$; b) $Fk(e, d) = ``b_1,...,b_n"$; and c) there exists some attribute $a_{i\in[1,n]}$ with $(a_i:c_1)\in A_I(v_1)$, $(a_i:c_2)\in A_I(v_2)$, and $c_1\neq c_2$. Based on mapping rules of \emph{IM} process, $v_1$ (resp. $v_2$) corresponds to some tuple $t_1$ (resp. $t_2$) in $I_R$, $v_s$ (resp. $v_d$) corresponds to some relation $s$ (resp. $d$) in $S_R$, the function $Fk$ over edge $e$ refers to a foreign-key $s[a_1,...,a_n]\rightarrow d[b_1,...,b_n]$ defined over $I_R$ by $S_G$. Then (b) implies that the tuple $t_1$ assigns a value to some attribute $a_{i\in[1,n]}$ that is different to that assigned by tuple $t_2$ to attribute $b_{i\in[1,n]}$. This means that there is a violation of foreign-key by tuple $t_1$ which makes $I_R$ inconsistent. However, we supposed that $I_R$ is consistent.\\

\noindent Therefore, each case of $I_G$ inconsistency leads to a contradiction, which means that if $I_R$ is consistent then its corresponding $I_G$ cannot be inconsistent.

By doing proof of part (B) in a similar way, we conclude that if $I_R$ is consistent (resp. inconsistent) then its corresponding instance graph $I_G$ must be consistent (resp. inconsistent). This completes the proof of Theorem \ref{t3}.$\square$
\end{proof}

\section{Conclusion and Future works}
In this paper, we proposed a complete mapping process that translates any relational database into an equivalent graph database by considering both schema and data mapping. Our mapping preserves the information and semantics of the relational database and maps any SQL query, over the relational database, into an equivalent Cypher query to be evaluated over the produced graph database. We plan to extend our model to preserve another mapping property, called monotonicity \cite{Stoica2019a}, which ensures that any update to the relational database will not require generating the corresponding graph database from scratch. Also, we will enrich the definition of the relational schema with more integrity constraints. We are conducting an experimental study based on real-life databases to check our approach's efficiency and effectiveness.

\newpage
\appendix
\noindent\textbf{\textsc{APPENDIX}}
\section{Proof of Theorem \ref{t1}}\label{appendix_proof_th1}
We prove Theorem \ref{t1} by providing a computable mapping $CM^{-1}:D_G\rightarrow D_R$ that reconstructs the original relational database from the generated graph database. To do so, $CM^{-1}$ must consist of two processes: 1) $SM^{-1}$ that reconstructs the original relational schema from the generated graph schema; and 2) $IM^{-1}$ that reconstructs the original relational instance from the generated instance graph. We provide hereafter complete definitions of $SM^{-1}$ and $IM^{-1}$.\\

\noindent\textbf{$SM^{-1}$ transformation rules.}
\sloppy{Given a schema graph $S_{_G}=(V_{_S},E_{_S},L_{_S},A_{_S},Pk,Fk)$ , the reversed schema mapping $SM^{-1}$ produces  a relational schema $S_{R}=(R, A, T, \Sigma)$ as follows:}

\begin{itemize}
\item For each vertex $v_r\in V_s$, we create a relation $r \in R$ with name $L_s(v_r)$. We denote by $v_r$ the vertex that corresponds to the relation $r$.

\item For each pair $(a:t) \in A_s(v_r)$, we add the attribute $a$ to $A(r)$ with $T(a)=t$.

\item For each $Pk(v_r)="a_1,...,a_n"$, we add $r[a_1,..,a_n]$ to the set $\Sigma$.

\item For each edge $e=(v_r, v_s)$ with $Fk(e,s)$=$``a_1,...,a_n"$ and $Fk(e,d)=``b_1,...,b_n"$, we add $r[a_{1},\cdots,a_{n}] \to s[b_{1},\cdots,b_{n}]$ to the set $\Sigma$.
\end{itemize}

Notice that the special pair $(vid:Integer)$ is ignored by the mapping $SM^{-1}$ since it has no equivalent part in the relational schema. However, it allows instance graphs to preserve the tuples identification.\\

\noindent\textbf{$IM^{-1}$ transformation rules.}
Given an instance graph $I_G=(V_I, E_I, L_I, A_I)$, the reversed instance mapping $IM^{-1}$ produces the original relational instance $I_R$ as follows:

\begin{itemize}
\item For each vertex $v\in V_I$ with $L_I(v)=r$, we create a tuple $t \in I(r)$ such that:
\subitem $t(tid)=c$ where $(vid:c)\in A_I(v)$; and
\subitem $t(a)=c$ for each $(a:c)\in A_I(v)$.
\end{itemize}

From the definition of \emph{IM} mapping (Section \ref{section_IM}), one can verify that for any two vertices $v_1,v_2\in V_I$, if $L_I(v_1)=L_I(v_2)$ then $v_1$ and $v_2$ have different values for attributes $vid$. Based on this remark, tuples generated by $IM^{-1}$ mapping satisfy the following condition: for any two tuples $t_1,t_2\in I(r)$, $t_1(tid)\neq t_2(tid)$.

\section{Running example of query mapping by S2C \ref{S2C} }\label{appendix_qm}
\begin{example}
\begin{figure}[t]
\centering
\includegraphics[width=12cm]{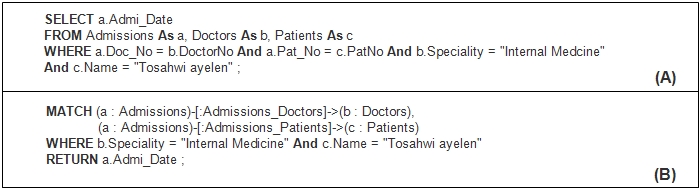}
\caption{Example of (A) an SQL query $Q_s$ and (B) its equivalent Cypher query $Q_c$.}
\label{q1}
\end{figure}

\begin{figure}[t]
\centering
\includegraphics[width=12cm]{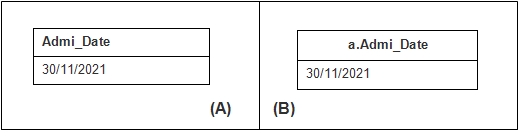}
\caption{Table (A) the result of $Q_s$  over $I_R$ and (B) the result of its equivalent $Q_c$ over $I_G$.}
\label{q2}
\end{figure}

Fig. \ref{q1} represents (A) an SQL query $Q_s$ and (B) its equivalent Cypher query $Q_c$. It is clear to see that each SQL clause is translated into a Cypher clause as explained by algorithm $S2C$. Precisely, each relation name (\emph{Admissions}, \emph{Doctors}, \emph{Patients}) in \emph{From-Clause} is translated into a vertex in $mc$ labeled with name of this relation. Moreover, each join operation between two relations is mapped to an edge between vertices corresponding to these relations. Each condition (e.g. $a.Doc\_No = b.DoctorNo$) in \emph{Where-Clause} is translated into an equivalent condition in $wc$. Finally, the selected items to return are extracted from the \emph{Select-clause}. The final Cypher query $Q_c$ is composed by combining the three clauses \emph{Match} ($mc$), \emph{Where} ($wc$) and \emph{Return} ($rc$).

Consider the relational instance $I_R$ of Fig.\ref{ex2} (A) and its equivalent instance graph $I_G$ in Fig.\ref{ex2} (B). When we apply $Q_s$ over $I_R$ (resp. $Q_c$ over $I_G$), we obtain the result table depicted in Fig.\ref{q2} (A) (resp. (B)). It is easy to see that both $Q_s$ and $Q_c$ return the same information. $\square$
\end{example}

\end{document}